\DeclareMathOperator*{\argmin}{arg\,min}
\newcommand\algoName{i-QLS }
\newcommand\blfootnote[1]{%
  \begingroup
  \renewcommand\thefootnote{}\footnote{#1}%
  \addtocounter{footnote}{-1}%
  \endgroup
}
\begin{document}

\title{i-QLS: Quantum-supported Algorithm for Least Squares Optimization in Non-Linear Regression
}

\titlerunning{i-QLS: Quantum-supported Algorithm for Least Squares}

\author{Supreeth Mysore Venkatesh\inst{1,2} \and
Antonio Macaluso\inst{2} \and
Diego Arenas\inst{2} \and
Matthias Klusch\inst{2} \and
Andreas Dengel\inst{1,2}}

\authorrunning{Venkatesh et al.}

\institute{University of Kaiserslautern-Landau (RPTU), Kaiserslautern, Germany \email{\{supreeth.mysore\}@rptu.de} \and
German Research Center for Artificial Intelligence (DFKI), Germany
\email{\{supreeth.mysore, antonio.macaluso, diego.arenas matthias.klusch, andreas.dengel\}@dfki.de}}

\maketitle
\begin{abstract}
We propose an iterative quantum-assisted least squares (i-QLS) optimization method that leverages quantum annealing to overcome the scalability and precision limitations of prior quantum least squares approaches. Unlike traditional QUBO-based formulations, which suffer from a qubit overhead due to fixed discretization, our approach refines the solution space iteratively, enabling exponential convergence while maintaining a constant qubit requirement per iteration. This iterative refinement transforms the problem into an anytime algorithm, allowing for flexible computational trade-offs. Furthermore, we extend our framework beyond linear regression to non-linear function approximation via spline-based modeling, demonstrating its adaptability to complex regression tasks. We empirically validate i-QLS on the D-Wave quantum annealer, showing that our method efficiently scales to high-dimensional problems, achieving competitive accuracy with classical solvers while outperforming prior quantum approaches. Experiments confirm that i-QLS enables near-term quantum hardware to perform regression tasks with improved precision and scalability, paving the way for practical quantum-assisted machine learning applications.
\blfootnote{This work has been accepted for publication in the proceedings of the 25th International Conference on Computational Science (ICCS 2025), held 7--9 July 2025, Singapore.}

\keywords{
Quantum Annealing \and 
Least Squares Optimization  \and
Non-Linear Regression  \and
Quantum Machine Learning}
\end{abstract}

\section{Background and Related Works}





A key distinction in machine learning methodologies lies in their optimization strategies, which are shaped by the assumptions that models make about the relationship between input features and the target variable. Models as Neural networks rely on non-convex optimization, often requiring extensive datasets and prolonged training due to the difficulty of escaping local minima \cite{goodfellow2016deep}. In contrast, parametric models based on least squares (LS) optimization \cite{hastie01statisticallearning}, such as splines \cite{reinsch1967smoothing} and support vector machines \cite{cortes1995support}, benefit from convex optimization, ensuring global optimality and theoretical robustness. However, the polynomial complexity of matrix operations in LS optimization introduces significant computational constraints as the number of features grows, limiting scalability.

Recently, several quantum machine learning models have been proposed for supervised learning tasks\cite{macaluso2024quantum}. However, relatively few studies have explored the potential of leveraging near-term quantum computing exclusively for training and parameter estimation while maintaining a classical model for inference \cite{sinha2025nav,jerbi2024shadows}. 
In particular, quantum annealing has been proposed as a method for reformulating LS optimization into a Quadratic Unconstrained Binary Optimization (QUBO) problem for execution on quantum hardware \cite{Cruz-Santos2019,10.1007/978-3-030-10564-8_23,Date2021}. 
Despite these advancements, quantum annealing approaches face two fundamental bottlenecks. Specifically, scalability remains a major challenge, as a tradeoff exists between the number of qubits and the precision of estimates of the optimized weights.
Also, these methods are primarily to linear models, which fail to capture the complexities of real-world problems.
Quantum annealing has been recently explored as an alternative paradigm for solving combinatorial optimization problems that naturally map onto a QUBO formulation. Although the LS problem in regression being continuous optimization does not inherently conform to the QUBO framework which is combinatorial optimization, few studies have adapted the formulation to harness quantum annealing for regression purposes.

For instance, \cite{Date2021} and \cite{10.1007/978-3-030-10564-8_23} discretize the solution space by encoding each continuous weight as a series of binary variables, thereby formulating a corresponding QUBO problem that is amenable to solution via a quantum annealer. A significant limitation of these approaches is that increasing the precision of the weights necessitates an exponential increase in the number of qubits, rendering the method infeasible for large-scale problems on current quantum hardware. Consequently, classical LS solvers—such as direct matrix inversion \cite{woodbury1950inverting}, QR decomposition  \cite{10.1093/comjnl/4.3.265}, or iterative gradient-based methods \cite{lanczos1952solution}—often outperform quantum approaches due to their capacity to achieve high precision without incurring the substantial overhead associated with quantum hardware.

Moreover, existing quantum annealing methods for LS have predominantly focused on linear regression, thereby neglecting non-linear problems that are more representative of real-world applications \cite{koura2024linearregressionusingquantum}. In contrast, gate-based quantum computation has addressed non-linear approximation through the development of quantum splines \cite{macaluso2020quantum,inajetovic2023enabling}. Splines, which are piecewise polynomial functions with smoothness constraints, provide an effective tool for modeling non-linear relationships within a structured regression framework. In the initial formulation of quantum splines \cite{macaluso2020quantum}, a specific LS formulation based on B-splines \cite{Boor1971} was employed to exploit the computational advantages of the HHL algorithm \cite{harrow2009quantum} for solving sparse linear systems. Although this approach offers a theoretical computational advantage, its reliance on fault-tolerant quantum computation limits its near-term applicability. Consequently, a variational counterpart has been proposed to leverage near-term quantum devices \cite{inajetovic2023enabling}, although it has not yet demonstrated a robust advantage over classical methods.

Our work builds on these methodologies but fundamentally differs in two key aspects. First, rather than formulating a fixed QUBO problem with a predetermined discretization, we introduce an iterative refinement process. In our approach, each QUBO solution serves as the input for the subsequent iteration, progressively narrowing the weight search space while maintaining a fixed qubit requirement. This iterative refinement accelerates convergence exponentially and overcomes the scalability limitations encountered in previous methods. Our method is inspired by classical iterative refinement techniques commonly used in numerical optimization—such as Newton’s method\cite{more1982newton}, gradient descent\cite{haji2021comparison}, and expectation-maximization \cite{moon1996expectation}. 
To the best of our knowledge, apart from combinatorial optimization problem like graph-clustering \cite{gcsq,quacs}, previous QUBO-based quantum regression methods have not incorporated an iterative refinement mechanism, rendering our approach unique in this context.

Second, our approach explicitly extends quantum-assisted regression to non-linear function approximation by integrating quantum spline formulations. In our framework, the advantages of splines—namely, their ability to model non-linear relationships through piecewise polynomial fitting with smoothness constraints—are harnessed within the iterative QUBO process to determine optimal spline coefficients. This integration enables the capture of non-linear dynamics while mitigating the fixed discretization issues of prior methods. By unifying iterative refinement with spline-based modeling, our method addresses the limitations of both conventional quantum annealing approaches and the earlier quantum spline formulations that rely on fault-tolerant computation.

\section{Methods}\label{sec:methods}

In this section, we present the mathematical formulation and derivation of our proposed i-QLS algorithm. We begin by posing the standard linear regression model as a discretized optimization problem, then describe how to embed it into a QUBO form suitable for a quantum annealer. Subsequently, we extend the iterative scheme to refine the precision of the solution at each iteration without requiring a large number of qubits from the outset.

\vspace{-8pt}

\subsection{Problem Formulation}
\vspace{-6pt}
Let 
\[
\mathbf{X} \in \mathbb{R}^{N \times d}, \quad 
\mathbf{y} \in \mathbb{R}^N,
\]
where \(N\) is the number of data points and \(d\) is the number of features (or variables). The goal of linear regression is to find a weight vector 
\[
\mathbf{w^*} = (w_1^*, w_2^*, \dots, w_d^*)^\top
\]
that minimizes the sum of squared errors (SSE):
\begin{equation}
\mathbf{w^*} = \argmin_{\mathbf{w} \in \mathbb{R}^d} \; S(\mathbf{w}) 
\;=\; 
\sum_{n=1}^N \bigl(y_n - \mathbf{w}^\top \mathbf{x}_n\bigr)^2,
\end{equation}
where \(\mathbf{x}_n\) is the \(n\)-th sample. 
The least squares solution can be found in closed form via \(\mathbf{w} = (\mathbf{X}^\top \mathbf{X})^{-1}\mathbf{X}^\top \mathbf{y}\) in the non-singular case. However, our method operates on a discretized search space to allow a quantum annealer to sample candidate \(\mathbf{w}\) values.

\subsection{Discretization and QUBO Construction}

Representing each weight parameter \(w_i \in \mathbb{R}\) by $m$ binary variables and initializing a sufficiently large interval $\Delta_i^{(0)}$ with lower bound $\ell_i^{(0)}$ and the upper bound $u_i^{(0)}$, we find the discretization step size:
\begin{equation}
\begin{split}
\delta_i^{(0)} \;=\; \frac{\Delta_i^{(0)}}{2^m-1} \quad \text{for} \quad i=1,\dots,d, \qquad
\text{where} \qquad \Delta_i^{(0)} \;=\; (u_i^{(0)} - \ell_i^{(0)})
\end{split}
\end{equation}

Initializing $w_i^{(0)} \gets \frac{(u_i^{(0)} + \ell_i^{(0)})}{2}$, our goal is to find $w_i^{(1)}$ restricted to $2^m$ equally spaced values in the interval $[\ell_i^{(0)},u_i^{(0)}]$ that is closest to the optimal weight $w_i^*$. Mathematically,

\vspace{-10pt}

\begin{equation}
\begin{split}
w_i^{(1)*} = \argmin_{w_i^{(1)}} |w_i^{(1)} - w_i^*|, \\
\text{where} \quad w_i^{(1)} \in \{\ell_i^{(0)}+\delta_i^{(0)}p~|~p \in \{0,1,\dots,2^{m}-1\}\}
\end{split}
\end{equation}
\begin{equation}
\mathbf{b} \;=\; 
\bigl(b_{1,0}, b_{1,1},\dots, b_{1,m-1}, \dots, b_{d,0}, \dots, b_{d,m-1}\bigr)^\top
\end{equation}
be the binary encoding vector representing the discretized weights. The weight vector \(\mathbf{w}\) can now be expressed as a linear function of \(\mathbf{b}\):
\begin{equation} \label{eqn:weight_vector_discretization}
\begin{split}
\mathbf{w^{(1)}}(\mathbf{b}) 
\;=\; 
\boldsymbol{\ell^{(0)}} 
\;+\; \mathbf{D^{(0)}} \mathbf{B}(\mathbf{b}),
\end{split}
\end{equation}
$$
\text{where} \quad \boldsymbol{\ell} = (\ell_1, \dots, \ell_d)^\top, \\
$$
$\mathbf{D}$ is a diagonal matrix containing the discretization step sizes $\delta_i$,
$$
\mathbf{D} = \text{diag}(\delta_1, \dots, \delta_d), 
$$
and $\mathbf{B}(\mathbf{b})$ is the vector whose entries are sums of powers of two weighted by the bits i.e.,
\begin{equation}
\begin{split}
\mathbf{B}(\mathbf{b}) = (B_1(\mathbf{b}_1), \dots, B_d(\mathbf{b}_d))^\top, \\
B_i(\mathbf{b}_i) 
\;=\; 
\sum_{p=0}^{m-1} 2^{m-1-p} \, b_{i,p}, \\
\mathbf{b}_i \;=\; \bigl(b_{i,0}, b_{i,1},\dots, b_{i,m-1}\bigr)^\top.
\end{split}
\end{equation}

\subsubsection{Quadratic Cost Function}

The sum of squared errors for the discretized weights is 
\begin{equation}
S^{(1)}(\mathbf{b}) 
\;=\; 
\sum_{n=1}^N \Bigl(y_n - \mathbf{w^{(1)}}(\mathbf{b})^\top \mathbf{x}_n\Bigr)^2.
\end{equation}
Expanding this, we get
\begin{equation}
S^{(1)}(\mathbf{b})
\;=\;
\sum_{n=1}^N \Bigl(y_n - \boldsymbol{\ell^{(0)}}^\top \mathbf{x}_n 
- \mathbf{B}(\mathbf{b})^\top \mathbf{D^{(0)}}^\top \mathbf{x}_n \Bigr)^2.
\end{equation}
Since each component of \(\mathbf{B}(\mathbf{b})\) is linear in the binary variables, \(S^{(1)}(\mathbf{b})\) becomes a polynomial (up to quadratic order) in those bits:
\begin{equation}
S^{(1)}(\mathbf{b})
\;=\;
\alpha^{(1)}
\;+\;
\sum_{r} \gamma_r^{(1)} \, b_r
\;+\;
\sum_{r < s} \Gamma_{r,s}^{(1)} \, b_r\,b_s,
\end{equation}
where the summations over \(r, s\) run over all binary variables in \(\mathbf{b}\), and \(\alpha^{(1)}, \gamma_r^{(1)}, \Gamma_{r,s}^{(1)}\) are real coefficients derived from \(\mathbf{X}, \mathbf{y}, \boldsymbol{\ell^{(0)}}, \mathbf{D^{(0)}}\). 

This polynomial is in fact strictly quadratic, because each \(b_r^2 = b_r\) as \(b_r \in \{0,1\}\). Thus we have effectively mapped the least squares cost onto a QUBO:
\begin{equation} \label{eqn:qubo}
\mathbf{b^{(1)*}}
\;=\;
\min_{\mathbf{b} \in \{0,1\}^{d \times m}} 
\;\; \alpha^{(1)} 
\;+\;
\sum_{r=1}^{d m} \gamma_r^{(1)} \, b_r
\;+\;
\sum_{1 \,\le\, r < s \,\le\, dm} 
\Gamma_{r,s}^{(1)} \, b_r\,b_s.
\end{equation}

Given Eq.~\ref{eqn:qubo}, a quantum annealer can directly solve QUBO problems by mapping them onto its native hardware architecture, where the cost function is minimized adiabatically through quantum tunneling. The annealer attempts to find a low-energy configuration of the binary variables that corresponds to the optimal solution of the given problem. However, the total number of binary variables that can be encoded is constrained by the available qubits, making it challenging to achieve high precision (i.e., large \(m\)) in naive discretization, as each additional bit per weight significantly increases qubit requirements.

\subsection{\algoName Algorithm}

To address the qubit-precision trade-off, we propose an iterative zoom-in approach that uses a small number of bits \(m \ge 1\) per weight but refines the solution region over multiple iterations. 
Let \(\ell_i^{(0)}\) and \(u_i^{(0)}\) be the initial bounds for weight \(w_i\). At iteration \(k=1,\dots,K\), each weight \(w_i\) is represented in \([\ell_i^{(k-1)},\, u_i^{(k-1)}]\) with \(m\) bits. With $\Delta_i^{(k-1)} = u_i^{(k-1)} - \ell_i^{(k-1)}$, the discretization step size is evaluated as
\[
   \delta_i^{(k-1)} 
   \;=\; 
   \frac{\Delta_i^{(k-1)}}{2^m-1}.
\]
Thus,
\[
   \mathbf{w^{(k)}}(\mathbf{b}) 
   \;=\; 
   \boldsymbol{\ell^{(k-1)}} 
   \;+\; \mathbf{D^{(k-1)}} \mathbf{B}(\mathbf{b})
\]
Construct the QUBO from 
\[
   S^{(k)}(\mathbf{b}) 
   \;=\; 
   \sum_{i=1}^N 
   \bigl( y_i - \mathbf{w^{(k)}}(\mathbf{b})^\top \mathbf{x}_i\bigr)^2
\]
with the updated bounds. Simplify the expression so that powers of binary variables are reduced via \(b_{j,r}^2 = b_{j,r}\).
Solve the QUBO using a quantum annealer.
   Retrieve the optimal binary solution \(\mathbf{b}^{(k)*}\). Compute 
   \[
   w_i^{(k)*} 
   \;=\;
   w_i(\mathbf{b}_i^{(k)*}), 
   \quad i=1,\dots,d.
   \]
For each \(i\):
\begin{equation} \label{eqn:search_interval_update}
\begin{split}
   \ell_i^{(k)} 
   \;=\;
   w_i^{(k)*} 
   \;-\;
   (\frac{\delta_i^{(k-1)}}{2{f(m)}}).
   \quad
   u_i^{(k)} 
   \;=\;
   w_i^{(k)*} 
   \;+\;
   (\frac{\delta_i^{(k-1)}}{2{f(m)}}), \\
   \text{where} \quad f(x) =
                    \begin{cases} 
                    2, & \text{if } x = 1, \\
                    1, & \text{otherwise}.
                    \end{cases}
\end{split}
\end{equation}
This ensures the next iteration’s search space is an interval of width \(\Delta_i^{(k)}\) centered at the best estimate \(w_i^{(k)*}\).
Observe that, for $m=1$, $\delta_i^{k} = \Delta_i^{k} \implies w_i^{(k)} \in \{\ell_i^{(k-1)}, u_i^{(k-1)}\}$, thus the step function $f(x)$ helps in shrinking the search space exponentially.
Repeat until the maximum number of iterations \(K\) is reached, or until a convergence criterion is met (e.g., changes in the loss below a threshold).

This procedure scales linearly with \(d\) in qubit usage (only \(d \times m\) qubits are needed), yet allows an effective exponential zoom-in on the candidate solution region over multiple iterations, thereby improving precision without globally increasing the qubit count. The pseudocode of the algorithm is given in \ref{algo:iterative_QA_LS}.

\begin{algorithm}[ht]
    \caption{i-QLS: Iterative Quantum-Assisted Least Squares}
    \label{algo:iterative_QA_LS}
    \KwIn{$\mathbf{X} \in \mathbb{R}^{N\times d}$ (feature matrix), $\mathbf{y} \in \mathbb{R}^{N}$ (target values), 
          bits per weight $m$, max iterations $K$, initial bounds $\ell_i^{(0)}, u_i^{(0)}$ for each $i=1,\dots,d$.}
    
    \For{$k = 1$ \textbf{to} $K$}{
        Compute step size: $\delta_i^{(k-1)} \gets \frac{u_i^{(k-1)} - \ell_i^{(k-1)}}{2^m-1},\; i=1,\dots,d.$ \\
        
        Construct weight representation $w_i^{(k)}(\mathbf{b})$ (Eq. \ref{eqn:weight_vector_discretization}) \\
        
        Formulate QUBO cost function $S^{(k)}(\mathbf{b})$. \\
        
        Solve $\min S^{(k)}(\mathbf{b})$ using a quantum annealer. \\
        
        Extract optimal solution: $\mathbf{b}^{(k)*} \gets \arg \min_{\mathbf{b}} S^{(k)}(\mathbf{b})$. \\
        
        Compute updated weight estimates: $w_i^* \gets w_i(\mathbf{b}^{(k)*})$. \\
        
        Update bounds: \\
        \Indp
        $\ell_i^{(k)} \gets w_i^{(k)*} - \frac{1}{2{f(m)}} \delta_i^{(k)},\quad u_i^{(k)} \gets w_i^{(k)*} + \frac{1}{2{f(m)}} \delta_i^{(k)}$.
        \Indm
    }
    
    \KwOut{$\mathbf{w}^{(K)*} = (w_1^{(K)*}, \dots, w_d^{(K)*})$ (optimal weights).}
\end{algorithm}

\begin{lemma}\label{lemma:convergence}
If the underlying least squares problem is well-posed (i.e., there exists a unique optimal weight vector $w^*$ such that $y=Xw^*$) and $w_i^* \in [l_i^{(0)}, u_i^{(0)}] \forall i$, then the mean squared error evaluated from the weights estimated at each iteration exponentially converges to $0$ as $k \to \infty$.
\end{lemma}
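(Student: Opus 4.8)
The plan is to control two coupled quantities across the iterations---the width of the per-coordinate search box and the residual error of the current estimate---showing that the first contracts geometrically and that it in turn bounds the second. I will use throughout that in the well-posed regime $\mathbf{y}=\mathbf{X}\mathbf{w}^*$, so the cost, viewed as a function of the represented weight $\mathbf{w}$, factors as $S(\mathbf{w}) = \|\mathbf{X}(\mathbf{w}-\mathbf{w}^*)\|^2 = (\mathbf{w}-\mathbf{w}^*)^\top\mathbf{X}^\top\mathbf{X}\,(\mathbf{w}-\mathbf{w}^*)$, a positive (semi)definite quadratic form minimized exactly at $\mathbf{w}^*$. The first step is the contraction of the box: the update in Eq.~\ref{eqn:search_interval_update} gives width $\Delta_i^{(k)} = \delta_i^{(k-1)}/f(m) = \Delta_i^{(k-1)}/\bigl((2^m-1)f(m)\bigr)$, so with $\rho := 1/\bigl((2^m-1)f(m)\bigr)$ the definition of $f$ yields $\rho = 1/2$ when $m=1$ and $\rho = 1/(2^m-1)\le 1/3$ when $m\ge 2$; hence $\rho<1$ in every case and $\Delta_i^{(k)} = \rho^{\,k}\,\Delta_i^{(0)}$, an exponentially shrinking box.

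Next I would prove by induction the containment invariant $w_i^*\in[\ell_i^{(k)},u_i^{(k)}]$ for every $i$, the base case being the hypothesis $w_i^*\in[\ell_i^{(0)},u_i^{(0)}]$. For the inductive step, let $\mathbf{w}^g$ denote the grid point obtained by rounding each $w_i^*$ to its nearest discretization level; since $\mathbf{w}^*$ lies in the current box, $\mathbf{w}^g$ is feasible and satisfies $|w_i^g-w_i^*|\le \delta_i^{(k-1)}/2$. As the annealer returns the minimal-energy grid point $\hat{\mathbf{w}}^{(k)}$, we have $S(\hat{\mathbf{w}}^{(k)})\le S(\mathbf{w}^g)$; feeding this through the spectral bounds $\lambda_{\min}(\mathbf{X}^\top\mathbf{X})\,\|\cdot\|^2 \le S(\cdot)\le \lambda_{\max}(\mathbf{X}^\top\mathbf{X})\,\|\cdot\|^2$ bounds $\|\hat{\mathbf{w}}^{(k)}-\mathbf{w}^*\|$ by a multiple of the step size and places $\mathbf{w}^*$ inside the refined box, closing the induction.

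Granting the invariant, both $\hat{\mathbf{w}}^{(k)}$ and $\mathbf{w}^*$ lie in the iteration-$(k-1)$ box, so $\|\hat{\mathbf{w}}^{(k)}-\mathbf{w}^*\| \le \bigl(\sum_i(\Delta_i^{(k-1)})^2\bigr)^{1/2} = \rho^{\,k-1}\bigl(\sum_i(\Delta_i^{(0)})^2\bigr)^{1/2}$. Substituting into $\mathrm{MSE}^{(k)} = \tfrac1N\|\mathbf{X}(\hat{\mathbf{w}}^{(k)}-\mathbf{w}^*)\|^2 \le \tfrac1N\lambda_{\max}(\mathbf{X}^\top\mathbf{X})\,\|\hat{\mathbf{w}}^{(k)}-\mathbf{w}^*\|^2$ gives $\mathrm{MSE}^{(k)}\le C\,\rho^{\,2(k-1)}$ for a constant $C$ depending only on $\mathbf{X}$ and the initial box, whence $\mathrm{MSE}^{(k)}\to 0$ exponentially with rate $\rho^2$, as claimed.

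The hard part will be the inductive containment step, where the multidimensional geometry bites: the annealer minimizes energy in the $\mathbf{X}^\top\mathbf{X}$-metric, whereas the grid and its refinement are axis-aligned, so the energy-optimal point need not be the coordinate-wise nearest one when $\mathbf{X}^\top\mathbf{X}$ is strongly off-diagonal. The spectral argument only delivers $|\hat w_i^{(k)}-w_i^*|\le \sqrt{\kappa(\mathbf{X}^\top\mathbf{X})}\,\bigl(\sum_j(\delta_j^{(k-1)})^2\bigr)^{1/2}$, which may exceed the refined half-width $\delta_i^{(k-1)}/2$; I would close this either by restricting to $m\ge 2$ with a well-conditioned (or near-diagonal) $\mathbf{X}^\top\mathbf{X}$---so that coordinate-wise rounding is itself energy-optimal and containment is exact---or by running the entire induction in the energy norm, letting the geometric decay of $S$ rather than strict box containment drive the conclusion. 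The $m=1$ case warrants separate treatment, since its half-width $\delta_i^{(k-1)}/4$ is strictly below the worst-case rounding distance $\delta_i^{(k-1)}/2$, so the optimum can momentarily leave the box even in one dimension.
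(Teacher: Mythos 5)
Your proposal follows the same skeleton as the paper's proof: establish that the per-coordinate interval width contracts by the factor $f(m)(2^m-1)$ per iteration (your $\rho=1/2$ for $m=1$ and $\rho\le 1/3$ for $m\ge 2$ matches Eq.~\ref{eqn:search_interval_shrinking} exactly), conclude that the boxes collapse to a point containing $w^*$, and pass to the MSE. Where you go further is in trying to make the middle step rigorous via the containment invariant $w_i^*\in[\ell_i^{(k)},u_i^{(k)}]$ and the spectral sandwich $\lambda_{\min}\|w-w^*\|^2\le S(w)\le\lambda_{\max}\|w-w^*\|^2$; the paper instead simply asserts that ``the QUBO formulation selects the discrete value closest to the true optimum'' and that the limit of the shrinking intervals ``must coincide with the optimal weight $w_i^*$,'' without justifying why $w^*$ never leaves the refined box. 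Your final bound $\mathrm{MSE}^{(k)}\le C\rho^{2(k-1)}$ is also a genuine strengthening: it gives the explicit exponential rate the lemma claims, whereas the paper's last step only invokes continuity of the MSE in $w$ and delivers plain convergence.

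The gap you flag is real, and it is the crux: your entire MSE bound rests on the unproven invariant that both $\hat{\mathbf{w}}^{(k)}$ and $\mathbf{w}^*$ lie in the iteration-$(k-1)$ box. Two distinct failure modes survive: (i) the annealer minimizes $S$ in the $\mathbf{X}^\top\mathbf{X}$-metric, so for ill-conditioned $\mathbf{X}^\top\mathbf{X}$ the energy-optimal grid point need not be the coordinate-wise nearest one, and your spectral bound $|\hat w_i^{(k)}-w_i^*|\lesssim\sqrt{\kappa}\,\delta^{(k-1)}$ can exceed the refined half-width; and (ii) even under ideal coordinate-wise rounding, for $m=1$ the refined half-width is $\delta_i^{(k-1)}/4$ while the rounding error can be $\delta_i^{(k-1)}/2$ (e.g.\ $w^*=\ell+0.4\Delta$ rounds to $\ell$, and the new box $[\ell-\Delta/4,\ell+\Delta/4]$ excludes $w^*$), so containment fails after one step. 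Neither of these is resolved in the paper --- its Discussion section even concedes that an iteration ``may select a suboptimal weight, shrinking the subsequent search space to exclude the true optimum,'' especially at low bit precision --- so you have not missed an argument that the authors possess; you have correctly located the hole that both proofs share. Your proposed repairs (restricting to $m\ge 2$ with near-diagonal $\mathbf{X}^\top\mathbf{X}$, or arguing in the energy norm with a monotone-decrease invariant on $S$) are the right directions, but as written the proposal, like the paper's own proof, establishes the lemma only under the additional unstated hypothesis that containment is preserved at every iteration.
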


\begin{proof}

At iteration $k=0$, the search interval for weight $w_i$ is given by $[l_i^{(0)}, u_i^{(0)}]$ with width $\Delta_i^{(0)}$. In the first iteration, the algorithm discretizes this interval into $2^m$ equally spaced values. Let the discretization step size be 
\[
\delta_i^{(0)} = \frac{\Delta_i^{(0)}}{2^m-1}.
\]
The QUBO formulation selects the discrete value closest to the true optimum, denoted by $w_i^{(1)}$. The algorithm then updates the search interval to be centered at $w_i^{(1)}$ with new bounds
\[
l_i^{(1)} = w_i^{(1)} - \frac{\delta_i^{(0)}}{2{f(m)}}, \quad u_i^{(1)} = w_i^{(1)} + \frac{\delta_i^{(0)}}{2{f(m)}},
\]
so that the new interval width is 
\[
\Delta_i^{(1)} = \frac{\delta_i^{(0)}}{f(m)} = \frac{\Delta_i^{(0)}}{{f(m)}(2^m-1)}.
\]

By repeating this process iteratively, the width of the interval after $k$ iterations is
\begin{equation} \label{eqn:search_interval_shrinking}
\Delta_i^{(k)} = \frac{\Delta_i^{(k-1)}}{{f(m)}(2^m-1)} = \frac{\Delta_i^{(0)}}{{({f(m)}(2^m-1))}^k}.
\end{equation}
Since ${f(m)(2^m-1)}> 1$ for any $m \ge 1$, it follows that
\[
\lim_{k \to \infty} \Delta_i^{(k)} = 0.
\]
Because the search interval shrinks to a single point, the weight estimate $w_i^{(k)}$ converges to a unique value, which must coincide with the optimal weight $w_i^*$ provided the model is correctly specified. 

Furthermore, since the mean squared error is a continuous function of the weights, it follows that
\[
\lim_{k\to\infty} E^{(k)} = \lim_{k\to\infty} \|Xw^{(k)} - y\|^2 = \|Xw^* - y\|^2.
\]
Under the assumption that the least squares problem is consistent (or that $w^*$ minimizes the error), we have $\|Xw^* - y\|^2 = 0$, which implies
\[
\lim_{k\to\infty} E^{(k)} = 0.
\]
\end{proof}

\section{Experiments}

We present experiments evaluating convergence behavior, scalability, and applicability of \algoName to non-linear regression.

\subsection{Convergence and Search Space Refinement}

Figure~\ref{fig:two_feature_data} shows the synthetic dataset generated for a two-feature linear regression problem. Using \algoName, we observe exponential decay of the mean squared error (MSE) over iterations (Figure~\ref{fig:mse_convergence}), with faster convergence for higher bit precision. For example, 6-bit precision achieves near-zero MSE at $k=9$ iterations, while 2-bit precision plateaus around $10^{-9}$. This supports our theoretical convergence rate.

Figure~\ref{fig:search_space_shrinkage} visualizes the search space refinement across iterations. The weight bounds for $w_1$ and $w_2$ contract exponentially, with higher bit precision enabling faster narrowing of the feasible region.

\begin{figure}[ht]
    \centering
    \begin{subfigure}{0.4\textwidth}
        \centering
        \includegraphics[width=\textwidth]{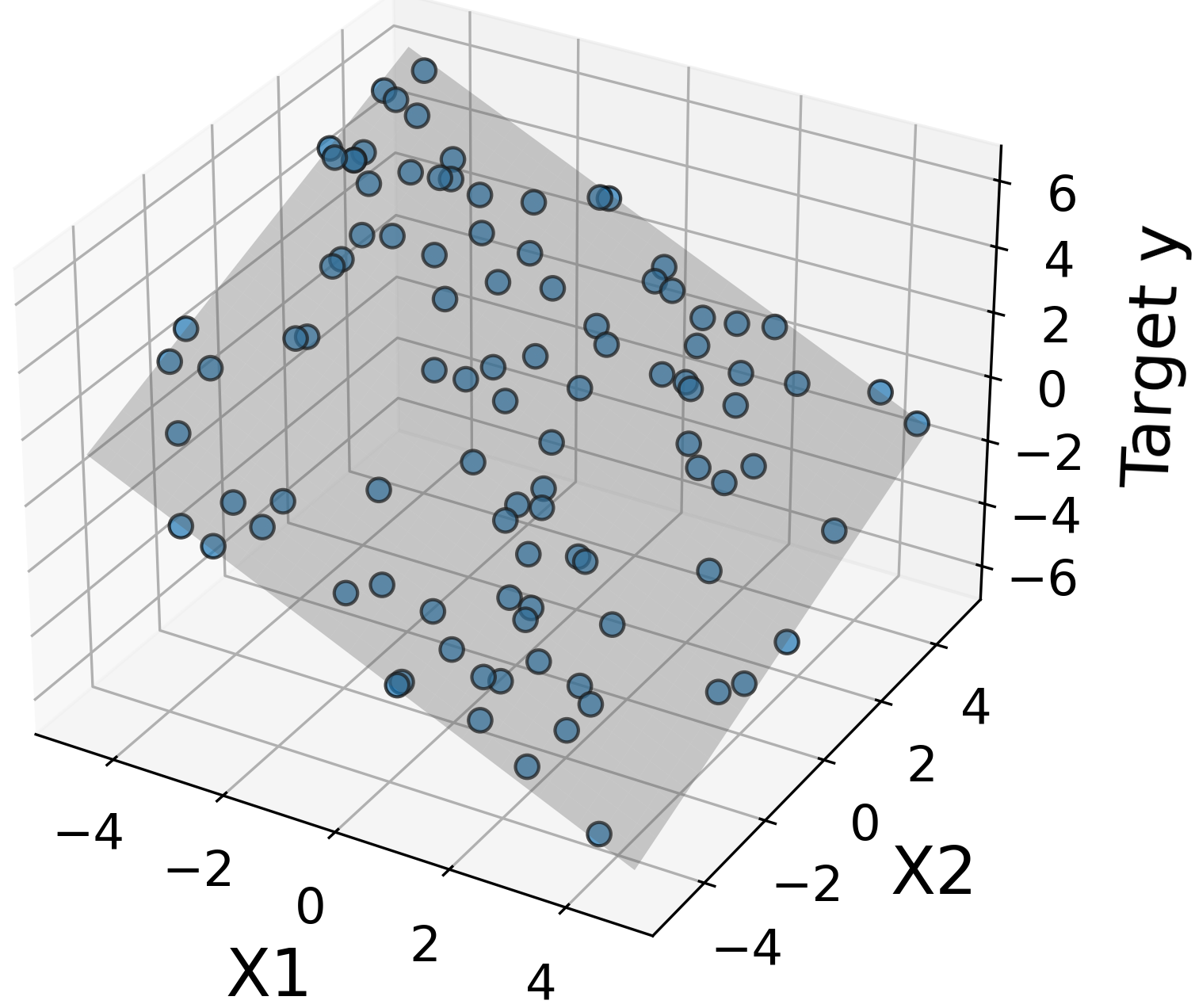}
        \caption{
        \textbf{Two-Feature Linear Data.} 
        100 data points $(x_i, y_i)$ where $x_i \in [-5,5]^2$ sampled from a linear model.}
        \label{fig:two_feature_data}
    \end{subfigure}
    \hspace{1pt}
    \begin{subfigure}{0.56\textwidth}
        \centering
        \includegraphics[width=\textwidth]{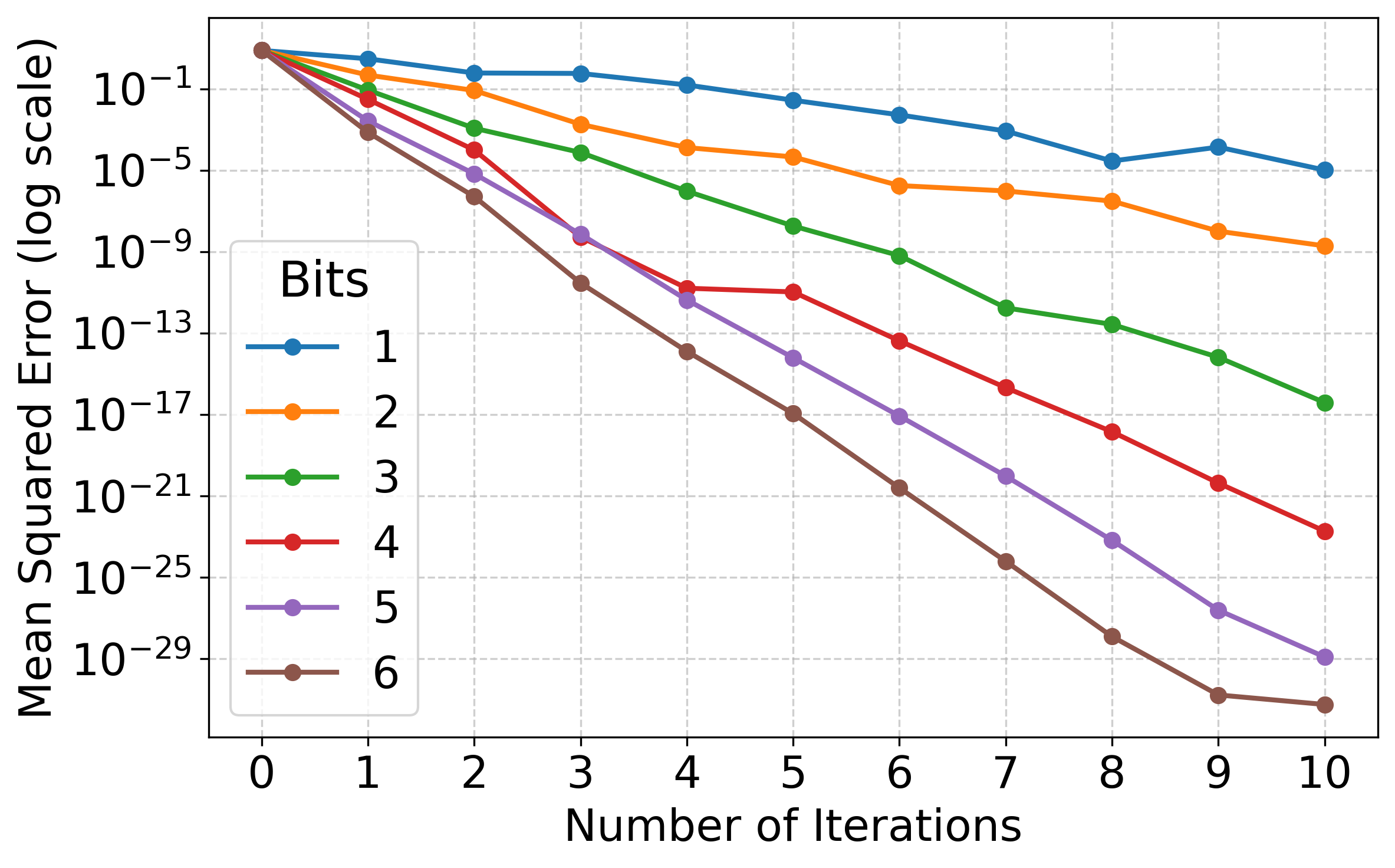}
        \caption{\textbf{MSE Convergence Over Iterations.} 
        Exponential decay of MSE with iterations, improving with increasing bits per weight.}
        \label{fig:mse_convergence}
    \end{subfigure}
    
    \medskip
    
    \begin{subfigure}{\textwidth}
        \centering
        \includegraphics[width=\textwidth]{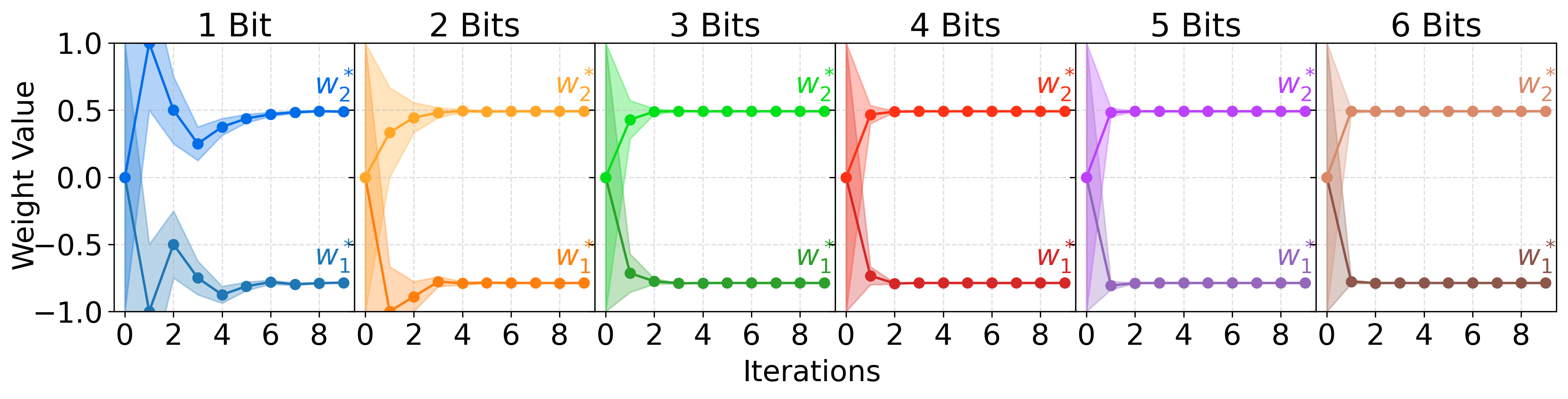}
        \caption{\textbf{Exponential Shrinking of the Search Space.} 
        The weight search space for $w_1$ and $w_2$ narrows exponentially with iterations.
        Higher bit precision enables finer updates, accelerating convergence.}
        \label{fig:search_space_shrinkage}
    \end{subfigure}    
    \caption{%
    Assessment of convergence rate and accuracy of \algoName} 
    \vspace{-12pt}
\end{figure}

\subsection{Non-Linear Function Approximation}

We applied \algoName to approximate five standard non-linear functions using linear splines with 20 knots. Figure~\ref{fig:spline_nonlinear} shows the results. The blue scatter points denote ground-truth values; the green curves illustrate the iterative spline fits. Intermediate fits (light green) gradually approach the final estimate (dark green) over 10 iterations. The piecewise linear nature of the spline basis leads to a segmented, angular approximation pattern.

While higher-order splines could achieve smoother fits, they would require more parameters and qubits. Nonetheless, these results demonstrate the ability of \algoName to extend beyond linear models and capture non-linear patterns.

\begin{figure}[ht]
    \centering
    \includegraphics[width=\textwidth]{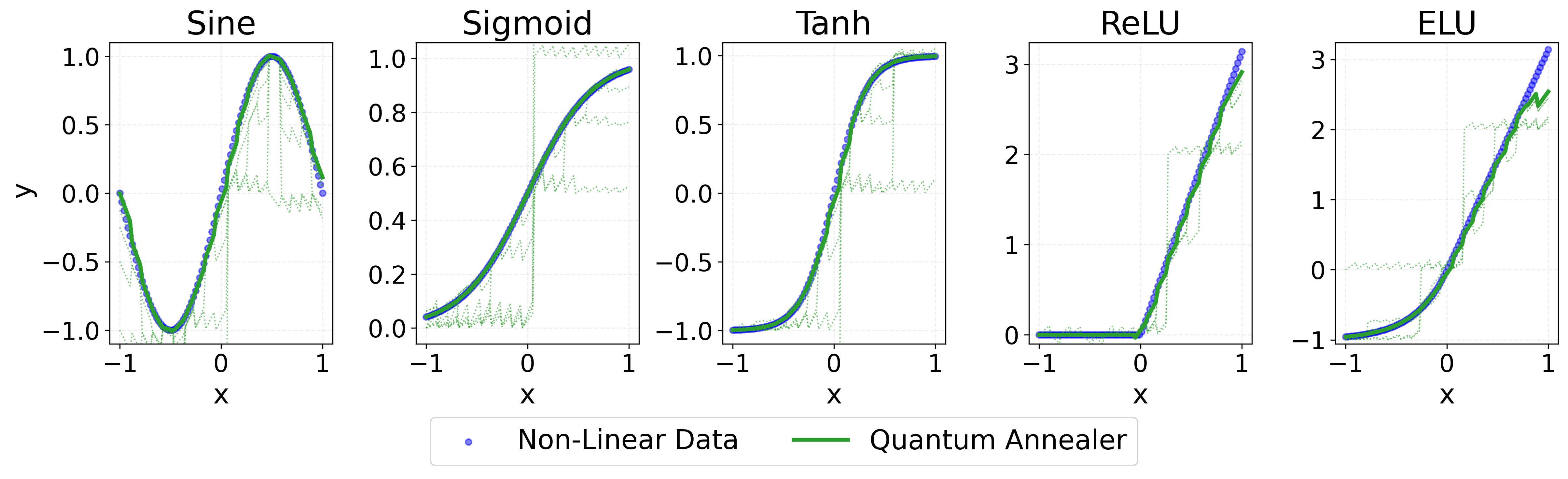}
    \caption{%
    \textbf{Spline-Based Approximation of Non-Linear Functions.} 
    The green curves illustrate the iterative refinement of the regression fit obtained using our quantum-assisted least squares approach with \textit{linear splines} (20 knots), one qubit per parameter, and up to 10 iterations. The lighter green lines correspond to intermediate fits across iterations, while the darker green curve represents the final approximation after 10 iterations.}
    \label{fig:spline_nonlinear}
\end{figure}

\section{Discussion and Conclusion}

In this work, we introduced i-QLS, an iterative quantum-assisted least squares algorithm that improves scalability over prior single-shot QUBO methods. By iteratively refining the search space around the best solution, the interval contracts by a factor of ${f(m)(2^m-1)}$ per iteration, achieving convergence within
\[
K = \mathcal{O}\left( \log_{{f(m)(2^m-1)}} \frac{1}{\epsilon} \right)
\]
iterations for precision $\epsilon$. Higher bit precision reduces $K$, but increases quantum resource demands, as reflected in Figure~\ref{fig:mse_convergence}, showing a trade-off between convergence speed and hardware feasibility.

Despite theoretical guarantees of convergence (even with $m=1$ as $k \to \infty$), practical limitations may affect performance. In noisy hardware or imperfect solvers, an iteration may select a suboptimal weight, shrinking the subsequent search space to exclude the true optimum. This risk increases at low bit precision, as seen in Figure~\ref{fig:mse_convergence}: for 1-bit precision, MSE temporarily increases at iteration 9 before declining at iteration 10. Such fluctuations emphasize the need for sufficient iterations to ensure convergence.

While i-QLS demonstrates strong scalability—successfully solving problems up to 175 features on a D-Wave Advantage annealer—its runtime is currently dominated by remote access latency and queue delays \cite{10669751}. Local hardware access could mitigate this overhead, making i-QLS more competitive for time-sensitive applications.

Beyond linear regression, we showed that i-QLS extends to spline-based non-linear regression, achieving accurate fits despite using only linear splines and low qubit counts. Higher-order splines remain a promising extension, though requiring additional qubits.

In summary, i-QLS offers a scalable and adaptable quantum-assisted regression framework. Future work will investigate adaptive bit allocation and iteration strategies to balance precision, convergence, and quantum resource constraints. The use of qubit-efficient techniques for solving QUBO problems on gate-based quantum computers opens new avenues for advancing quantum-assisted optimization methods \cite{10821431}.

\begin{credits}
\subsubsection{Code Availability.} The code associated with this paper is available at: \\
\href{https://github.com/supreethmv/i-QLS}{https://github.com/supreethmv/i-QLS}
\end{credits}

\begin{credits}
\subsubsection{\ackname} This work has been partially funded by the German Ministry for Education and Research (BMB+F) in the project QAIAC-QAI2C under grant 13N17167.

\vspace{-12pt}
\end{credits}

\bibliographystyle{splncs04}
\bibliography{references}

\end{document}